
\documentclass[letterpaper, 10 pt, conference]{IEEEtran}

%\documentclass[a4paper, 10 pt, conference]{IEEEtran}

%\documentclass[conference]{IEEEtran}

%\documentclass[conference]{USletter}
% Some Computer Society conferences also require the compsoc mode option,
% but others use the standard conference format.
%
% If IEEEtran.cls has not been installed into the LaTeX system files,
% manually specify the path to it like:
% \documentclass[conference]{../sty/IEEEtran}

\IEEEoverridecommandlockouts

\usepackage{amsmath}
\usepackage{graphicx}
\usepackage{amssymb}
\usepackage{amsthm}
\usepackage{bbm}
\usepackage{easylist}
\usepackage{fp}
\usepackage{siunitx}
\usepackage{tikz}
\usepackage{xcolor}
\usepackage{color}

\usepackage[ruled]{algorithm2e}
\usepackage[lofdepth,lotdepth]{subfig}
%\bibliography{mittbib}% name of bib file

\newtheorem{remark}{Remark}
\newtheorem{theorem}{Theorem}

\newtheorem{problem}{Problem}

\newtheorem{corollary}{Corollary}

\newtheorem{model}{Model}

\ifCLASSINFOpdf
  % \usepackage[pdftex]{graphicx}
  % declare the path(s) where your graphic files are
  % \graphicspath{{../pdf/}{../jpeg/}}
  % and their extensions so you won't have to specify these with
  % every instance of \includegraphics
  % \DeclareGraphicsExtensions{.pdf,.jpeg,.png}
\else
  % or other class option (dvipsone, dvipdf, if not using dvips). graphicx
  % will default to the driver specified in the system graphics.cfg if no
  % driver is specified.
  % \usepackage[dvips]{graphicx}
  % declare the path(s) where your graphic files are
  % \graphicspath{{../eps/}}
  % and their extensions so you won't have to specify these with
  % every instance of \includegraphics
  % \DeclareGraphicsExtensions{.eps}
\fi
\usepackage{url}
% url.sty was written by Donald Arseneau. It provides better support for
% handling and breaking URLs. url.sty is already installed on most LaTeX
% systems. The latest version and documentation can be obtained at:
% http://www.ctan.org/pkg/url
% Basically, \url{my_url_here}.

% *** Do not adjust lengths that control margins, column widths, etc. ***
% *** Do not use packages that alter fonts (such as pslatex).         ***
% There should be no need to do such things with IEEEtran.cls V1.6 and later.
% (Unless specifically asked to do so by the journal or conference you plan
% to submit to, of course. )

% correct bad hyphenation here
\hyphenation{op-tical net-works semi-conduc-tor}

\begin{document}
%
% paper title
% Titles are generally capitalized except for words such as a, an, and, as,
% at, but, by, for, in, nor, of, on, or, the, to and up, which are usually
% not capitalized unless they are the first or last word of the title.
% Linebreaks \\ can be used within to get better formatting as desired.
% Do not put math or special symbols in the title.
\title{Multi-Fleet Platoon Matching:\\ A Game-Theoretic Approach  }

% author names and affiliations
% use a multiple column layout for up to three different
% affiliations
\author{\IEEEauthorblockN{Alexander Johansson, Ehsan Nekouei, Karl Henrik Johansson, Jonas M\aa rtensson}
\thanks{*
		This work is supported by the Strategic Vehicle Research and Innovation Programme, Horizon2020, the Knut and Alice Wallenberg Foundation, the Swedish Foundation for Strategic Research and the Swedish Research Council.}
 \thanks{A. Johansson, E. Nekouei, K. H. Johansson and J. M\aa rtensson are with the Integrated Transport Research Lab and Department of Automatic Control,
School of Electrical Engineering and Computer Science, KTH Royal Institute
of Technology, Stockholm, Sweden.,
 SE-100 44 Stockholm, Sweden. Emails:
         {\tt\small \{alexjoha, nekouei, kallej, jonas1\}@kth.se}}}

%Ehsan Nekouei, Karl Henrik Johansson, Jonas M\aa rtensson

% conference papers do not typically use \thanks and this command
% is locked out in conference mode. If really needed, such as for
% the acknowledgment of grants, issue a \IEEEoverridecommandlockouts
% after \documentclass

% for over three affiliations, or if they all won't fit within the width
% of the page, use this alternative format:
% 
%\author{\IEEEauthorblockN{Michael Shell\IEEEauthorrefmark{1},
%Homer Simpson\IEEEauthorrefmark{2},
%James Kirk\IEEEauthorrefmark{3}, 
%Montgomery Scott\IEEEauthorrefmark{3} and
%Eldon Tyrell\IEEEauthorrefmark{4}}
%\IEEEauthorblockA{\IEEEauthorrefmark{1}School of Electrical and Computer Engineering\\
%Georgia Institute of Technology,
%Atlanta, Georgia 30332--0250\\ Email: see http://www.michaelshell.org/contact.html}
%\IEEEauthorblockA{\IEEEauthorrefmark{2}Twentieth Century Fox, Springfield, USA\\
%Email: homer@thesimpsons.com}
%\IEEEauthorblockA{\IEEEauthorrefmark{3}Starfleet Academy, San Francisco, California 96678-2391\\
%Telephone: (800) 555--1212, Fax: (888) 555--1212}
%\IEEEauthorblockA{\IEEEauthorrefmark{4}Tyrell Inc., 123 Replicant Street, Los Angeles, California 90210--4321}}

% use for special paper notices
%\IEEEspecialpapernotice{(Invited Paper)}

% make the title area
\maketitle

\thispagestyle{plain}
\pagestyle{plain}

% As a general rule, do not put math, special symbols or citations
% in the abstract
\begin{abstract}

We consider the platoon matching problem for a set of  trucks with the same origin, but different destinations. It is assumed that the vehicles benefit from traveling in a platoon for instance through reduced fuel consumption. The vehicles belong to different fleet owners and their strategic interaction is modeled as a non-cooperative game where the vehicle actions are their departure times. Each truck has a preferred departure time and its  utility function is defined as the difference between its benefit from platooning and the cost of deviating from its preferred departure time. We show that the platoon matching game is an exact potential game. An algorithm based on best response dynamics is proposed for finding a Nash equilibrium of the game. At a Nash equilibrium, vehicles with the same departure time are matched to form a platoon. Finally, the total fuel reduction at the Nash equilibrium is studied and compared with that of a cooperative matching solution where a common utility function for all vehicles is optimized.

\end{abstract}

\IEEEpeerreviewmaketitle

\section{Introduction}

\subsection{Motivation}

 The transportation sector emitted $24 \%$ of the total $\text{CO}_2$ emissions due to fuel combustion in 2015 \cite{Publishing2017} and faces an enormous challenge of reducing its fuel consumption. Truck platooning has received attention due to its potential to significantly reduce the fuel consumption \cite{Tsugawa2016} \cite{Alam2015}. The potential fuel reduction by platooning is shown by numerical studies in \cite{Davila2013}, \cite{Bishop2017} and by field experiments in \cite{Alam2010}, \cite{Browand2004}, \cite{Tsugawa2016}. Truck platooning has other benefits, besides reduction in fuel consumption, e.g., decreasing the workload of the drivers, improving safety by reducing the human factor and reducing traffic congestion. The interested reader is referred to \cite{Alam2015} for a high-level introduction to truck platooning.

Platoon matching is an important step in platoon formation where a group of vehicles is divided in to smaller groups and each group will form a platoon. A review on planning strategies for truck platooning, including platoon matching, is given in \cite{Bhoopalam2018}. One solution for the platoon matching problem is to form platoons by solving a centralized optimization problem. When trucks belong to different fleet owners, they may instead seek individually to maximize their profit from platooning. This scenario can be modeled using non-cooperative game theory.

\subsection{Related Work}

Centralized solutions for the platoon matching problem have been proposed in the literature, e.g., \cite{Liang2016}, \cite{Larsson2015}, \cite{Hoef2018}, \cite{Boysen2018}. Centralized solutions rely on the assumption that the vehicles have a common objective function. This assumption is realistic when vehicles are owned by one transportation company, \emph{i.e.,} a single fleet.

The authors in \cite{Liang2016} studied the problem of determining optimal speed profiles, in terms of total fuel consumption, of two vehicles that are merging on the road and then platooning. The results are used to determine if two trucks save fuel or not by platooning.

The authors of \cite{Larsson2015} and \cite{Hoef2018} studied the platoon matching problem to minimize the total fuel consumption for vehicles in a large road network. In these works, it was assumed that vehicles are cooperative and share all information about their positions and missions with a centralized optimizer. In \cite{Larsson2015}, the optimal platoon matching problem was solved by formalizing a route optimization problem where the cost of traversing a road decreases with the number of vehicles platooning on it. In \cite{Hoef2018}, the matching problem was solved by first determining feasible platoon partners, in terms of mission dead-lines, and then finding the optimal platoon leaders such that the total fuel saving is maximized.

The impact of the time constraints of the vehicles on platooning was studied in  \cite{Boysen2018}. The authors considered a group of vehicles with the same origin and destination. They posed the platoon matching problem as a centralized optimization problem in which a group of trucks platoon if they depart from the origin simultaneously. Different from this work, we model the interaction among the vehicles as a non-cooperative game.

The authors of \cite{Farokhi2013} modeled the strategic interaction among cars and trucks as a non-cooperative game. The vehicles share origin and destination. Each car or truck minimizes its expected traveling cost by deciding on its departure time and trucks benefit from platooning. Different from \cite{Farokhi2013}, we allow the vehicles to have different destinations in a road network defined on a graph with tree topology. Hence, our models captures that platoons may split into sub-platoons along the routes.

\subsection{Main Contributions}

In this paper we consider the platoon matching problem for a set of trucks with competitive behavior. The vehicles have the same origin, but different destinations. The interaction among the vehicles is modeled as a non-cooperative game where the vehicles' actions are their departure times from a common origin. The utility function of each vehicle in the game is a combination of the monetary saving from platooning and a penalty for deviating from its preferred departure time. We show that the platoon matching game is a potential game and it admits at least one Nash equilibrium. An algorithm for finding a Nash equilibrium is proposed.  The algorithm is used to numerically study the equilibrium actions of the vehicles and their impacts on the total fuel saving due to platooning.

This paper is structured as follows. In Section \ref{problemform}, the model of the platoon matching problem is presented and the platoon matching game is defined. In Section \ref{DG}, we show that the platoon matching game is a potential game and an algorithm for finding a Nash equilibrium is proposed. In section \ref{simulation}, the algorithm is used in numerical experiments to study the  platoon matching problem when the preferred departure times are random. Finally, conclusions and future work are given in Section \ref{conclusions}.

\section{The Platoon Matching Problem} \label{problemform}

In this section, the platoon matching problem is formulated as a non-cooperative game. We start by describing the platoon matching problem.

\begin{problem}(Platoon Matching Problem)\label{problem1}
Consider a set of vehicles located at a node in a road network. The vehicles have different destinations and different preferred departure times. However, each vehicle can adjust its departure time to the departure time of another vehicle in order to benefit from platooning. The vehicles have different fleet owners, \emph{i.e.,} they seek to  maximize their own individual utility functions.  
\end{problem}

The set of vehicles is denoted by $\mathcal N=\{1,...,N\}$. The road network is represented by a directed graph $G~=~(V,E)$, where the vertices in $V$ represents the origin and destinations of the vehicles and the edges in $E$ represent the road segments. We assume that $G$ is a directed tree. The origin of vehicles is denoted by the node $v_1$, \emph{i.e.,} all vehicles are initially located at $v_1$. We assume that $v_1$ is the root node of $G$ and its degree is equal to $1$, see Fig. \ref{fig:RN} on page \pageref{fig:RN} for an example of $G$. The length of a road segment $e\in E$ is denoted by $d(e)$. Let $P(M) \subseteq E$ denote the road segments that at least one vehicle in the set $M\subseteq \mathcal N$ traverses and let $n(e,M)$ be the number of vehicles in the set $M$ that traverse the edge $e \subseteq E $.

\subsection{Vehicle Actions} \label{static}

Each vehicle $i\in\mathcal{N}$ has its own preferred departure time from the node $v_1$, but is allowed to deviate from its preferred departure time in order to platoon with other vehicles. The preferred departure time of vehicle $i$ is denoted by $t_i, \ i\in~ \mathcal{N}$. The vehicles have time windows in which they have to depart from the node $v_1$ due to the driver rest time regulations, deadlines etc. The time window of each vehicle $i \in \mathcal{N}$ is denoted by $T_i =[  \underline t_i , \bar t_i]$.

The actions of the vehicles are their departure times. Let $s_i$ denote the action of the vehicle $i\ \in \mathcal{N}$. Let $s =(s_1,\cdots,s_{N} )$ denote the strategy profile and let $s_{-i} =(s_1,\cdots,s_{i-1},s_{i+1},\cdots,s_{N} )$ denote the actions of all vehicles except vehicle $i$. The action of each vehicle $i$ is restricted to the set $S_i = T_i \cap \mathcal T$, where $\mathcal T =\{t_i|i \in \mathcal{N}\}$, \emph{i.e.,}  $S_i$ is the set of preferred departure times which lie within the time window of vehicle $i$. In Fig. \ref{fig:timewind}, the construction of the set $S_i$ is illustrated by an example. In this figure, the filled nodes indicate the preferred departure times of the vehicles $1,...,5$. The time window of vehicle $3$ is marked in the figure. The feasible actions of vehicle $3$ is then $S_3=\{ t_2, t_3, t_4 \}$, since the preferred departure times of the vehicles $2,3,4$ are included in the time window $T_3=[ \underline t_3 , \bar t_3]$.

\begin{figure}  
\centering
\begin{tikzpicture}[scale=0.8]

 \filldraw 
(0,0) circle (2pt) node[align=left,   below] {$t_1$} --
(2,0) circle (2pt) node[align=center, below] {$t_2$} -- 
(4,0) circle (2pt) node[align=right,  below] {$t_3$} --
(6,0) circle (2pt) node[align=right,  below] {$t_4$} --
(8,0) circle (2pt) node[align=right,  below] {$t_5$} ;
\draw [->]   (8,0) -- (10,0); 
\path (1,1.3) node(x1) {} 
      (1,-1) node(y1) {$\underline t_3$}
      (6.3,1.3) node(x2) {} 
      (6.3,-1) node(y2) {$\bar t_3$}
        (0.9,1) node(x3) {} 
      (6.4,1) node(y3) {}
      (3.65,-1.5) node(x4) {$S_3$};
\draw (x1) -- (y1);
\draw (x2) -- (y2);
\draw [<->] (x3) -- (y3) node [midway, fill=white] {$T_3$};
\draw (x4) [->] -- (2.2,-0.15);
\draw (x4) [->] -- (3.9,-0.6);
\draw (x4) [->] -- (5.7,-0.3);
\end{tikzpicture}
\caption{The preferred departure time of vehicles, the time window of vehicle 3 and its actions. 
}  
\label{fig:timewind}
   \end{figure}
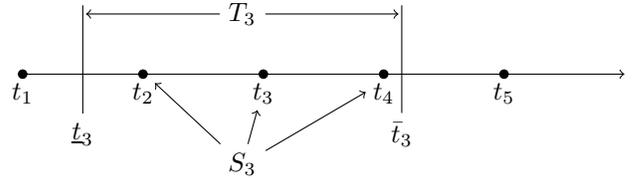

Next, we introduce three sets which are helpful in the game-theoretic formulation of the platoon matching problem. The set $C_j(s)=\{i | s_i=t_j, i\in\mathcal{N} \}$ denotes the vehicles which select their departure times as $t_j$. Thus, if $|C_j(s)|$ is more than one, the vehicles in $C_j(s)$ will at least platoon over the outgoing edge of $v_1$. Let $C(s)=\{C_j (s) | t_j \in \mathcal{T} \}$ represent the set of all potential platoons depending on the actions of the vehicles.

\iffalse
Given $s_{-i}$, the set
 
 \begin{equation}
   Q_{ij}(s_{-i})=P(C_j(s) \setminus \{i\}) \cap  P(\{i\})  
 \end{equation}
 is constituted by the edges that are traversed by both vehicle $i$ and any vehicle in the set $C_j(s) \setminus \{i\}$.

\fi

\subsection{Utility of Vehicles} \label{utilitysec}

In this section, we first present a model of the monetary saving from platooning on one road segment. Then, the utility functions of vehicles are defined.

\subsubsection{Model of Monetary Saving From Platooning}\label{fuelsave}

The leading vehicle in a platoon is commonly called the platoon leader and the other vehicles are commonly called the platoon followers. The field experiment \cite{Browand2004} evaluated the fuel reduction of two similar vehicles due to platooning. The experiment took place on a relatively flat road. When the inter-vehicle distance was $3-10$ m, the total reduction of fuel consumption was $10-12\%$ for the platoon follower and $5-10\%$ for the platoon leader. The field experiment \cite{Alam2010} showed a reduction of fuel consumption between $1-7 \%$  for the platoon follower depending on the gap between the vehicles.   

According to the above experiments, the aggregated fuel saving in a platoon depends on the platooning distance and the number of vehicles in the platoon. Thus, we assume that on each road segment, the aggregated monetary saving from platooning is a function of the length of the platoon and the length of the road segment. Consider a platoon with vehicles $1,\dots,N$, where the platoon members are ordered according to their positions in the platoon, \emph{i.e.,} vehicle $1$ is the platoon leader and so on. Let $\tilde f_N(i)$ represent the monetary saving, due to fuel saving per meter of the $i$-th vehicle in the platoon. The aggregated monetary saving of a platoon of length $N$ over a road segment of length $d$ is then given by 

\begin{equation}
\sum \limits_{i=1}^N  \tilde f_N(i) d.  
\end{equation}
Moreover, we assume that the aggregated monetary saving from platooning is equally shared among the vehicles in the platoon. Hence, after sharing, the monetary saving per meter of each vehicle in the platoon is   

\begin{equation}
f(N)=\frac{1}{N}\sum \limits_{i=1}^N  \tilde f_N(i).
\end{equation}

\begin{remark}

 In this paper, the game-theoretic analysis is provided for a general function $f(\cdot)$, but for the numerical results, we assume $f(N)=k_p\frac{N-1}{N}$. This special model for $f(\cdot)$ is suitable when the fuel saving of the platoon leader is zero, \emph{i.e.,} $ \tilde f_N(1)=0$, and the fuel saving for each platoon follower is $k_p$ per meter, \emph{i.e.,} $ \tilde f_N(i)=k_p$ for $i=2,...,N$. 

\end{remark}

\subsubsection{Utility Functions}

In this section, we define the vehicles' utility functions. The model allows the vehicles to have different destinations so the platoons may split into sub-platoons along the route, e.g., see Fig. \ref{fig:style}. In this figure, the vehicles $1,2,3$ have destinations $v_5,v_3,v_4$, respectively. The vehicles have the same departure time from $v_1$ and form a platoon on the first road segment. The platoon splits at $v_2$ and the vehicles $1$ and $2$ form a sub-platoon on the road segment between $v_2$ and $v_3$. 
The model captures the monetary loss of a vehicle if it deviates from its preferred departure time, e.g., due to higher probability of missing a dead-line, longer working day for the driver and penalty for cutting a rest for the driver. We start by defining the utility function for a general function $f(\cdot)$.

\begin{model}[A Non-Cooperative Utility] \label{modelu1}
For each vehicle $i \in \mathcal N$ and given actions $s_i=t_j$ and $s_{-i}$, i.e., vehicle $i$ is a part of a platoon consisting of the vehicles $C_j(s)$, the utility of vehicle $i$ is defined as

\begin{equation}\label{utility}
u_i(s)= \sum \limits_{e \in P(i)}\!\!\! \big(f( n(e, C_j)) d(e) \big) -\beta(t_j,t_i), 
\end{equation}
where the sum represents the monetary saving due to platooning. The term $\beta(t_j,t_i)$ represents the loss for adapting the departure time from $t_i$ to $t_j$. 
\end{model}

In our numerical results in Section \ref{simulation}, Model \ref{modelu1} is used with $f(n)=k_p\frac{n-1}{n}$ and  $\beta(t_j,t_i)=k_t|t_j-t_i|$. This model is accurate when on each road segment the leader has zero fuel saving, the followers have equal fuel saving and the total monetary saving from platooning is shared among the platoon members.

%\emph{i.e.,}

\iffalse

Our game-theoretic analysis is based on Model \ref{modelu1}.
 We next define a utility model which is a special case of Model 1. This model is used in our numerical results in Section \ref{simulation}. 
\begin{model}[A Special Utility Model]\label{utilityspec}
Consider Model \ref{modelu1}, with $f(n)=k_p\frac{n-1}{n}$ and  $\beta(t_j,t_i)=k_t|t_j-t_i|$. Then, the utility for each vehicle $i$ is given by 
\begin{equation}\label{induteq}
u_i(s)=k_p \sum \limits_{e \in P(i)} \!\!\!  \Big(\frac{n(e, C_j(s))-1}{ n(e, C_j(s))}  d(e) \Big)-k_t|t_j-t_i|.
\end{equation}
 \end{model}
This model is accurate when on each road segment the leader has zero fuel saving and the followers have equal fuel saving. The total monetary saving from platooning is shared among the platoon members.

\fi

 \begin{remark}
 
  When a platoon is formed, we assume that one of the platoon members is randomly selected as the leader.  
\end{remark}

\begin{remark}

Although we only considered monetary saving due to the fuel saving above, there are other benefits from platooning which can be considered in the same framework, e.g., reduction in $\text{CO}_2$ emissions and workload of drivers, safety improvements and reduction of traffic congestion. 

\end{remark}

\begin{figure}
	\centering
	\includegraphics[scale=0.4]{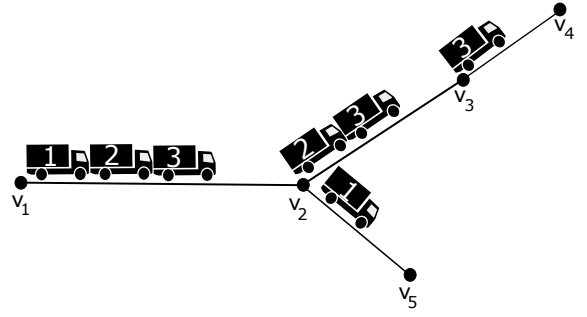}
	\caption{A platoon of three vehicles that splits along the route.}
	\label{fig:style}
\end{figure}

\subsection{Platoon Matching Game}\label{gametheory}

 The interaction among the vehicles is modeled by a non-cooperative game. The players are the vehicles in $\mathcal{N}$. The action of each vehicle $i\in \mathcal N$ is its departure time $s_i~\in~S_i$ and the strategy space of the game is $S=S_1 \times \cdots \times  S_{ N}$, where $\times$ denotes the Cartesian product.   The platoon matching game is formally defined by the triplet $\mathcal G~=~(\mathcal N, S, U)$, where  $U(s)=\{u_i(s)|i\in \mathcal N \}$.

\subsection{Cooperative Solution}

In the numerical results, we also consider an aggregated utility function which can be used to obtain a cooperative solution for the platooning problem. In the cooperative scenario, vehicles have a common utility function which is defined next.

\begin{model}[A Cooperative Utility]\label{utilityagg}
	In the cooperative scenario the common utility function of all vehicles is defined as 
	
	\begin{equation}
	u(s)=\sum \limits_{k \in \mathcal N} \bigg ( \sum \limits_{e \in P(k)}\!\!\! \Big(f( n(e, C_{I_k}(s))) d(e) \Big) -\beta(s_k,t_k)  \bigg ),
	\end{equation}
	where $s_k$ is the decision of vehicle $k$ and $I_k(s_k)=j$ if $s_k=t_j$.
\end{model}

\section{Equilibrium Analysis of the Platoon Matching Game} \label{DG}

We consider pure Nash equilibrium (NE) as the solution concept for the game $\mathcal G$. We first study the existence of NE for the platoon matching game. Then, an algorithm is proposed for finding an NE. We start by defining the NE. A pure NE is a strategy profile $s^* \in S$ such that

\begin{equation}
u_i(s_i^*,s_{-i}^*)\geq u_i(s_i,s_{-i}^*),\ \forall s_i\in S_i, \ \forall i\in \mathcal N. 
\end{equation}

\subsection{Existence of NE}\label{defgamre}

Before stating the main result of this subsection, we define the notion of a potential game. The game $\mathcal G$ is called an exact potential game if there exists a function $\Phi: S \rightarrow \mathbb{R}$ such that, for all $i\in \mathcal{N}$ and all $s'_{i}, s''_{i}\in S_i, s_{-i}\in S_{-i}$ , we have

\begin{equation}
  \Phi(s_i',s_{-i})-\Phi(s_i'',s_{-i})=u_i(s_{i}',s_{-i})-u_i(s_{i}'',s_{-i}),
\end{equation}
where $S_{-j}=S_1 \times \cdots \times  S_{i-1} \times  S_{i+1} \times    \cdots\times   S_{N}$. It is shown in \cite{Monderer1996} that finite exact potential games have at least one NE. The next lemma shows that $\mathcal G$ is an exact potential game.

\begin{theorem}\label{lemma1}

Consider the platoon matching game $\mathcal G$, with the utility functions given by Model \ref{modelu1}. Let $f(\cdot)$ be an arbitrary mapping from $\mathbb N$ to $[0,f_{max}]$ with  $0\leq f_{max}<\infty$. Then, $\mathcal G$ is a potential game with the potential function 

    \begin{equation}\label{potent}
\Phi(s) = \sum  \limits_{C \in C(s )} \!\!\! \phi(C) -   \sum \limits_{l \in \mathcal{N} } \beta(t_l,s_l ),
\end{equation}
where
    
\begin{equation}
    \phi(C)= \sum \limits_{e \in P (C)} r(n(e,C)) d(e)
\end{equation}    
and    
    
  \begin{equation}
 r(n)=\sum \limits_{i=1}^{n}f(n).
\end{equation}

\end{theorem}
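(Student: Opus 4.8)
The plan is to verify the defining exact-potential identity $\Phi(s_i',s_{-i})-\Phi(s_i'',s_{-i})=u_i(s_i',s_{-i})-u_i(s_i'',s_{-i})$ directly, by fixing a vehicle $i$ and an opponent profile $s_{-i}$ and comparing two actions $s_i'=t_j$ and $s_i''=t_k$ (the case $t_j=t_k$ being trivial). First I would dispose of the deviation-penalty term: since $-\sum_{l\in\mathcal N}\beta(t_l,s_l)$ is additively separable over vehicles and only the $l=i$ summand depends on $s_i$, its contribution to $\Phi(s_i',s_{-i})-\Phi(s_i'',s_{-i})$ is exactly the negative of $i$'s deviation penalty, which already matches the penalty difference in $u_i$. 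All of the real work therefore lies in the platooning term $\sum_{C\in C(s)}\phi(C)$.

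The second step is a locality observation: moving $s_i$ from $t_k$ to $t_j$ leaves every platoon $C_m(s)$ with $m\notin\{j,k\}$ unchanged, so those terms cancel in the difference. The only affected platoons are $C_k$, which loses vehicle $i$, and $C_j$, which gains it. Writing $A=C_k(s'')\setminus\{i\}$ and $B=C_j(s'')$ (neither containing $i$), the platooning part of the potential difference collapses to $[\phi(B\cup\{i\})-\phi(B)]-[\phi(A\cup\{i\})-\phi(A)]$. I would then reduce everything to one key lemma: for any $M\subseteq\mathcal N\setminus\{i\}$,
\begin{equation}\label{keylemma}
\phi(M\cup\{i\})-\phi(M)=\sum_{e\in P(i)}f\big(n(e,M\cup\{i\})\big)\,d(e),
\end{equation}
i.e. the marginal change of $\phi$ when $i$ joins $M$ is exactly $i$'s platooning saving inside $M\cup\{i\}$. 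Granting \eqref{keylemma}, applying it with $M=B$ and with $M=A$ and subtracting reproduces precisely the platooning part of $u_i(s_i',s_{-i})-u_i(s_i'',s_{-i})$, which finishes the proof.

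To establish \eqref{keylemma} I would expand $\phi$ and partition the edges by who traverses them. Since $P(M\cup\{i\})=P(M)\cup P(i)$, inserting $i$ alters the count $n(e,\cdot)$ only on edges of $P(i)$: on $e\in P(i)\cap P(M)$ it rises from $n(e,M)$ to $n(e,M)+1$, on $e\in P(i)\setminus P(M)$ it rises from $0$ to $1$, and edges in $P(M)\setminus P(i)$ are untouched and cancel. Thus $\phi(M\cup\{i\})-\phi(M)$ equals $\sum_{e\in P(i)\cap P(M)}\big[r(n(e,M)+1)-r(n(e,M))\big]d(e)+\sum_{e\in P(i)\setminus P(M)}r(1)\,d(e)$. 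The crux --- and the single place where the particular form of $r$ is used --- is the finite-difference identity $r(n)-r(n-1)=f(n)$ together with $r(1)=f(1)$, valid because $r$ is the running partial sum of $f$. This turns each bracket into $f$ evaluated at the post-join count and collapses the whole expression into $\sum_{e\in P(i)}f(n(e,M\cup\{i\}))d(e)$, giving \eqref{keylemma}. This step is exactly the Rosenthal-potential device for congestion games, adapted so that a vehicle's marginal contribution to $\phi$ coincides with the shared per-vehicle saving $f$.

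I expect the main obstacle to be bookkeeping rather than ideas: correctly tracking the three edge classes as $i$ joins and leaves, and making sure the $r$-to-$f$ collapse rests on $r$ being the cumulative sum of $f$ so that $r(n)-r(n-1)=f(n)$ (this is the property that actually drives the telescoping). I would also check the degenerate cases in which a platoon becomes empty after $i$ departs or is empty before $i$ arrives; these are absorbed uniformly by the convention $\phi(\emptyset)=0$, and since the distinct nonempty members of $C(s)$ are pairwise disjoint, no term is double-counted.
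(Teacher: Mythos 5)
Your proof is correct and takes essentially the same route as the paper's: both isolate the terms unaffected by vehicle $i$'s deviation (the paper collects them in a quantity $O(s)$), reduce the comparison of $\Delta\Phi$ and $\Delta u_i$ to the edges in $P(i)$ of the two affected platoons, and close the argument with the telescoping property $r(n)-r(n-1)=f(n)$ of the cumulative sum $r(n)=\sum_{k=1}^{n}f(k)$ (which is what the paper's $r(n)=\sum_{i=1}^{n}f(n)$ is evidently intended to mean), exactly the Rosenthal-potential device you name. Packaging the computation as a marginal-contribution lemma applied once for joining and once for leaving, and noting the $\phi(\emptyset)=0$ convention, is a cleaner write-up of the same argument rather than a different one.
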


\begin{proof}
See Appendix \ref{appen}.
\end{proof}

\iffalse

We next show that the game $\mathcal G$ with Model \ref{utilityspec} admits at least one NE.

\begin{remark}
 Consider the game $\mathcal G$ with the Model \ref{utilityspec} for the individual utility functions. Then, the game $\mathcal G$ is an exact potential game and it admits at least one pure NE. The conclusion follows from Theorem \ref{lemma1}, with $f(n)=k_p\frac{n-1}{n}$
\end{remark}

\fi

\iffalse

\begin{corollary}\label{corro}

Consider the game $\mathcal G$ with Model \ref{modelu1} with $f(n)=~k_p$ if $n>1$, $f(1)=0$ and $\beta(t_j,t_i)=k_t|t_j-t_i|$. Then, the potential function of the game $\mathcal G$ equals the sum of all the vehicles individual utility functions with Model \ref{utilityspec}, i.e.,

\begin{equation}
\sum\limits_{i\in \mathcal N} \Big ( \sum \limits_{e \in P(i)} \!\!\!  \Big(\frac{n(e, C_{j(s_i)}(s))-1}{ n(e, C_{j(s_i)}(s))}  d(e) \Big)-k_t|s_i-t_i| \Big ).
\end{equation}

\end{corollary}

\begin{proof}

See Appendix \ref{appen2}

\end{proof}

\fi

 \subsection{Equilibrium Seeking Algorithm}\label{algsec}

 The authors in \cite{Monderer1996} showed that in finite exact potential games, the best response dynamics converges to an NE. In the best response dynamics, the players update their actions according to their best response functions, one at a time. The best response function of vehicle $i$ to $s_{-i}$ is defined as 

\begin{equation}
B_i (s_{-i})=  \underset{s_i\in S_i }{\arg\max}   \ u_i(s_i, s_{-i}).
\end{equation}

%%% 

 Algorithm \ref{alg} presents a platoon matching algorithm based on the best response dynamics. If two vehicles have the same equilibrium action then they will be in the same platoon. This algorithm is used in the numerical results. 

\iffalse
In the numerical results we will compare the solution of the platoon matching problem, i.e., the vehicles equlibrium actions of the non-cooperative game $\mathcal G$, with a cooperative solution. The cooperative solution can be determined by the Algorithm \ref{alg} with the Model \ref{utilityagg} as individual utility functions of the vehicles. 
\fi

\begin{algorithm} 
\SetKwInOut{Input}{input}\SetKwInOut{Output}{output}
\Input{Initial strategy profile, $s^0=[t_1,...,t_N]^T$}
\Output{NE, $s^*  $}
\BlankLine
$q=0$ \\
$\hat s^{q-1}=s^0$\\
\While{$s^{q}   \neq \hat s^{q-1} \lor q=0 $ }{ $\hat s^{q}  =s^q $ \\ 
\For{$i \in \mathcal{N}$}{
  $s_i^{q}   =B_i (s^q_{-i} )$
 }
$q=q+1$ \\
$s^q=s^{q-1}$\\}
 
$s^* =s^q $
  \caption{Nash equilibrium seeking algorithm}
 \label{alg}
\end{algorithm}

\section{Numerical Results}\label{simulation}

In this section, Algorithm \ref{alg} is used to compute the equilibrium of the platoon matching game among a set of vehicles. First, the set-up of the numerical experiments is explained. Second, the convergence of the algorithm is demonstrated with an example. Then, we study the fuel saving and the proportion of non-platooning vehicles when the preferred departure times are random.

\subsection{Simulation Set-up}

 We consider the road network shown in Fig. \ref{fig:RN}, where the vehicles are located at node $v_1$. Vehicle  destinations are randomly drawn from $v_2,...,v_{13}$ according to a uniform distribution. The preferred departure time of each vehicle $i$ is drawn according to a uniform distribution on the time interval $[0,\alpha]$, \emph{i.e.,} $t_i \sim U(0,\alpha)$. Note that $\alpha$ affects the variance of the preferred departure times of the vehicles. Thus, a large value of $\alpha$ implies that the preferred departure times are scattered over a wide interval. The time window of each vehicle $i$ is defined as $T_i=[t_i-500, t_i+~500 ]$, \emph{i.e.,} the length of the time window is $1000$ seconds and it is centered at the preferred departure time $t_i$.  The individual utility function of each vehicle is defined by 
 Model \ref{modelu1}, with $f(n)=k_p\frac{n-1}{n}$, $\beta(t_j,t_i)=k_t|t_j-t_i|$, $k_t=1.5\times 10^{-2}$ and $k_p=5\times 10^{-5}$. We assume that each liter of fuel costs one dollar, thus, each platoon follower saves $k_p=5\times 10^{-5}$ liters of fuel per followed meter.

\subsection{Convergence of Algorithm \ref{alg} }

In this sub-section, we first study the evolution of the strategy profile of vehicles under Algorithm \ref{alg}. For the numerical results in this sub-section the set of vehicles is $\mathcal N=\{1,...,5 \}$ and $\alpha$ is set to $15000$. 

The convergence of the vehicles' actions under Algorithm~\ref{alg} is illustrated in Fig. \ref{fig:match}. The initial action of each vehicle is its preferred departure time. The evolution of the strategy profile is shown in Fig. \ref{fig:match}. According to this figure, the strategy profile converge to an NE. Vehicles $1,3,4,5$ reach an agreement and are therefore matched. Vehicle $2$ has a unique equilibrium action and departs on its own preferred departure time.

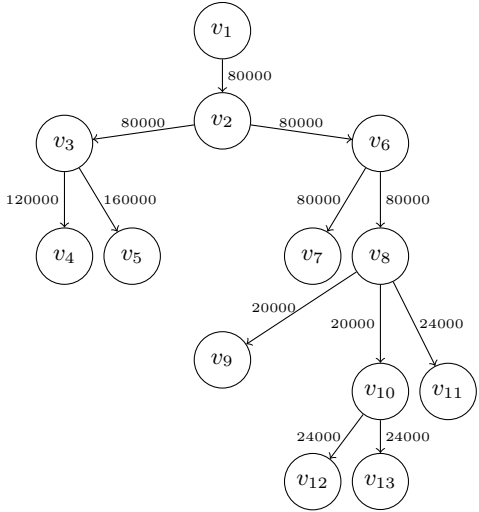
\begin{figure}
	\small
	\centering
	\begin{tikzpicture}[scale=0.6]
	\path   
	(0,0) node(x1)[circle,draw=black, fill=white,align=left,below, minimum size=0.75 cm] {$v_1$} 
	(0,-2) node(x2)[circle,draw=black, fill=white,align=left,   below, minimum size=0.75 cm] {$v_2$} 
	(-3.5,-2.5) node(x3)[circle,draw=black, fill=white,align=left,   below, minimum size=0.75 cm]  {$v_3$}
	(-3.5,-5) node(x4)[circle,draw=black, fill=white,align=left,   below, minimum size=0.75 cm]  {$v_4$}
	(-2,-5) node(x5)[circle,draw=black, fill=white,align=left,   below, minimum size=0.75 cm]  {$v_5$}
	(3.5,-2.5) node(x6)[circle,draw=black, fill=white,align=left,   below, minimum size=0.75 cm]  {$v_6$}
	(2,-5) node(x7)[circle,draw=black, fill=white,align=left,   below, minimum size=0.75 cm]  {$v_7$}
	(3.5,-5) node(x8)[circle,draw=black, fill=white,align=left,   below, minimum size=0.75 cm]  {$v_8$}
	(0,-7.3) node(x9)[circle,draw=black, fill=white,align=left,   below, minimum size=0.75 cm]  {$v_9$}
	(3.5,-8) node(x10)[circle,draw=black, fill=white,align=left,   below, minimum size=0.75 cm]  {$v_{10}$}
	(5,-8) node(x11)[circle,draw=black, fill=white,align=left,   below, minimum size=0.75 cm]  {$v_{11}$}
	(2,-10) node(x12)[circle,draw=black, fill=white,align=left,   below, minimum size=0.75 cm]  {$v_{12}$}
		(3.5,-10) node(x13)[circle,draw=black, fill=white,align=left,   below, minimum size=0.65 cm]  {$v_{13}$}
	 ;
	 \tiny
	\draw [->] (x1) -- (x2)node [midway,right ]  {$80000$}; 
	\draw [->] (x2) -- (x3)node [midway,above]  {$80000$}; 
	\draw [->] (x3) -- (x4)node [midway,right,left ]  {$120000$}; 
	\draw [->] (x3) -- (x5)node [midway,right,right ]  {$160000$}; 
	\draw [->] (x2) -- (x6)node [midway,,above ]  {$80000$}; 
	\draw [->] (x6) -- (x7)node [midway,right,left ]  {$80000$};
	\draw [->] (x6) -- (x8)node [midway,right,right ]  {$80000$};  
	\draw [->] (x8) -- (x9)node [midway,right,left ]  {$20000$};
	\draw [->] (x8) -- (x10)node [midway,right,left ]  {$20000$};  
	\draw [->] (x8) -- (x11)node [midway,right,right ]  {$24000$};  
	\draw [->] (x10) -- (x12)node [midway,right,left ]  {$24000$}; 
	\draw [->] (x10) -- (x13)node [midway,right,right ]  {$24000$}; 
	\end{tikzpicture}
	
	\normalsize
	\caption{The road network used for numerical results. 
	} \label{fig:RN}
\end{figure}

\begin{figure}
    \centering
    \includegraphics[scale=0.35]{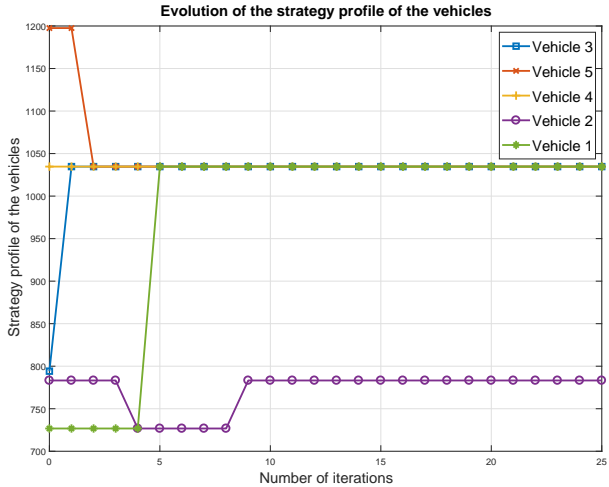}
    \caption{The evolution of the strategy profile of the vehicles under the Algorithm~\ref{alg}.}
    \label{fig:match}
\end{figure}

\subsection{Impact of $\alpha$ on the Game Outcome}\label{simset}

In this sub-section, we study the impact of $\alpha$ on the total fuel saving and the proportion of vehicles which leave the origin node without platooning. We also compare the results with a cooperative solution which is obtained by sequentially maximizing  the sum of the vehicle utility functions. At  each time, the action of one vehicle is adjusted such that the sum of the utilities is maximized while the actions of other vehicles are fixed. This approach is repeated until it converges. The limiting actions are a cooperative solution to the platoon matching problem.

For the numerical results in this sub-section, we consider $10$ vehicles. The variance of the preferred departure times is varied, \emph{i.e.,} $\alpha$ is varied and we study its impact on the total fuel saving of the vehicles and the proportion of non-platooning vehicles. For each value of $\alpha$, we repeated the numerical experiment $100$ times.

 In Fig. \ref{case1}, the total fuel saving and the average proportion of non-platooning vehicles are shown when $\alpha$ is varied from $0$ to $1500$. According to this figure, the total fuel saving decreases as $\alpha$ becomes large, while the proportion of non-platooning vehicles increases. Note that when the preferred departure times of the vehicles are close, the penalty terms, due to the difference between the preferred and actual departure times, are relatively small and more vehicles will therefore platoon. 

Moreover, the cooperative solution, in general, results in a higher total fuel saving and smaller proportion of non-platooning vehicles, compared with the NE solution. Note that the individual utility for platooning is higher for the cooperative case, which incentivizes the vehicles to platoon. 
 However, for small $\alpha$, the equilibrium solution resulted in a higher fuel saving.

\begin{figure}
     \centering
     \subfloat[][Average total fuel saving as a function of $\alpha$.]{\includegraphics[scale=0.50]{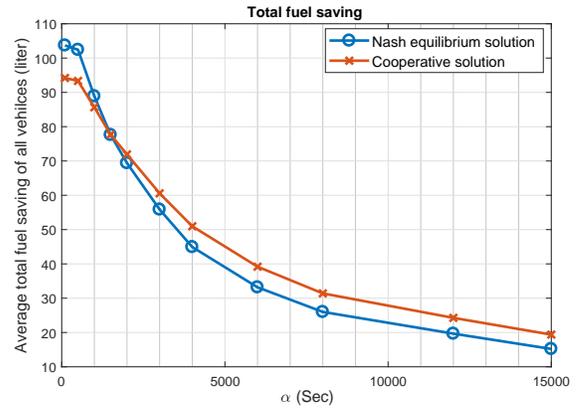}\label{fig:Ut}}
     \\\subfloat[][Average proportion of non-platooning vehicles as a function of $\alpha$ ]{\includegraphics[scale=0.50]{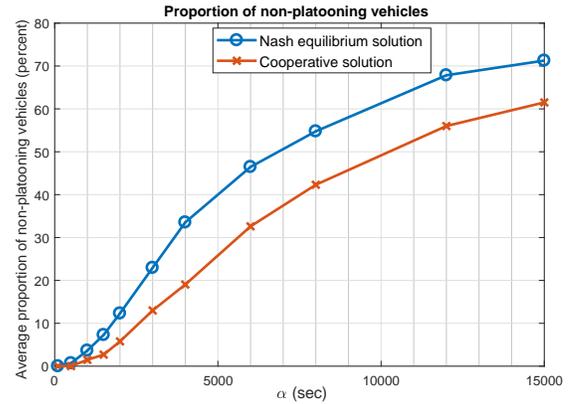}\label{fig:HTA}}
     \caption{Average total fuel saving as a function of $\alpha$ (a) and average proportion of non-platooning vehicles as a function of $\alpha$ (b). }
     \label{case1}
\end{figure}

\section{Conclusions and future work}\label{conclusions}

We proposed a non-cooperative game which models the multi-fleet platoon matching problem. It was shown that the game has at least one pure  NE. An algorithm based on the best response dynamics was proposed for finding an NE. The algorithm was used to study the impact of the variance of the preferred departure times on the total fuel saving and proportion of non-platooning vehicles. The total fuel saving and the proportion of non-platooning vehicles highly depend on the variance of the preferred departure times. Moreover, the non-cooperative solution was compared with a cooperative solution. The numerical results shows that the cooperative solution results in a larger fuel saving and a smaller proportion of non-platooning vehicles compared with the non-cooperative solution.

In the future we will study the platoon matching problem when the vehicles have different origins and destinations in a road network defined on a graph with general topology and each vehicle stops at several nodes where it can be matched with other vehicles. Furthermore, we plan to take into account uncertain traveling times on the road segments.

\appendices
\section{Proof of Theorem \ref{lemma1}}\label{appen}
We aim to show that $\Phi (s_i,s_{-i})-\Phi (s_i',s_{-i})$ equals  $u_i(s_i,s_{-i})-u_i(s_i',s_{-i})$, where $s_i$ and $s_i'$ are two different actions of an arbitrary vehicle $i$. We denote the terms which are not affected by alternating the action of vehicle $i$ from $s_i$ to $s_i'$ by
\begin{align*}
     O(s)= &  \sum \limits_{C \in C(s)  \setminus  \{C_j(s),C_{j'}(s)\}}   \phi(C) +\\   &   \sum \limits_{e \in P (C_j(s)) \setminus P(i) }     r\big(n(e,C_{j}(s))\big) d(e)+ \\   &
      \sum \limits_{e \in P (C_{j'}(s))  \setminus  P(i) }     r\big(n(e,C_{j'}(s))\big) d(e)    - \\  &  \sum \limits_{l \in \mathcal{N}  \setminus  \{i\}} \beta(t_l,s_l). 
\end{align*}
We then have

\begin{align*}
   \Phi(s_i,s_{-i})= &  \sum\limits_{e \in P(i)} r\big(n(e,C_j(s))\big) d(e)+ \\ &   \sum\limits_{e \in P(i) } r\big(n(e,C_{j'}(s))\big) d(e)- \\ &  \beta(t_i,s_i)+ O(s)
\end{align*}
Note that if $s_i=t_j$ then $n(e,C_{j}(s))=n(e,C_{j}(s)/\{i\})+1$ and if $s_i\neq t_j$ then $n(e,C_{j}(s))=n(e,C_{j}(s)/\{i\})$. Hence, if $s_i=t_j$ we have

\begin{align*}
    \Phi(s_i,s_{-i})=  &  \sum\limits_{e \in P(i)} r\big(n(e,C_j(s)/\{i\})+1 \big) d(e)+ \\ &  \sum\limits_{e \in P(i)} r \big(n(e,C_{j'}(s)/\{i\}) \big) d(e)  -  \\ &  \beta(t_i,s_i) + O(s), 
\end{align*}
and similar when $s_i'=t_{j'}$. We define $\Delta \Phi =\Phi(s_i,s_{-i})-\Phi(s_i',s_{-i})$. Then we have
\small
\begin{align*}
  \Delta \Phi= & \!   \sum\limits_{e \in P(i)} \!\!\!\! \bigg(r \Big(n(C_j(s)/\{i\},e)+ \! \! 1 \!  \Big) \! - r  \Big(n(e,C_j(s)/\{i\}) \Big)  \! \! \bigg   )d(e)  -  \\ &\beta(t_i,s_{i}) -  \\ & \!   \sum\limits_{e \in P(i)}  \!\!\!\!   \bigg(r \Big(n(e,C_{j'}(s)/\{i\}) \!    +1 \!   \Big) \! \! -   \!    r\Big( \! n(e,C_{j'}(s)/\{i\})\Big  )  \! \!\bigg)d(e) + \\ &   \beta(t_i,s_{i}').
\end{align*}
\normalsize
We define $\Delta u_i=u_i(s_i,s_{-i})-u_i(s_i',s_{-i})$. Then we have

\begin{align*}
  \Delta u_i= &   \sum\limits_{e \in P(i)} \!\!\!\! f \Big(n(C_j(s)/\{i\},e)+1 )  \bigg)d(e) - \\ &\beta(t_i,s_{i}) -  \\ &   \sum\limits_{e \in P(i)}  \!\!\!\! f \Big(n(e,C_{j'}(s)/\{i\})+1 ) \Big)d(e)  + \\ &  \beta(t_i,s_{i}').
\end{align*}
Hence, if $r(\cdot)$ satisfies

\begin{equation}
 r(n+1)-r(n)=f(n+1), \ \ n=0,1,...      
\end{equation}
then we have $\Delta \Phi= \Delta u_i$. But such $r(\cdot)$ does always exists since

\begin{equation}
 r(n)-r(0)=\sum \limits_{i=1}^{n}r(i)-r(i-1) =\sum \limits_{i=1}^{n}f(n),     
\end{equation}
by $r(0)=0$. The conclusions follows.

\qed

\iffalse

\section{Proof of Corollary \ref{corro}}\label{appen2}

Note that

\begin{align}
      \sum \limits_{i \in C_j(s)} \sum \limits_{e \in P(i)} \!\!\! \Big( \frac{n(e, C_j(s))-1}{n(e, C_j(s))} d(e)\Big)=
      \\ \sum \limits_{e \in P(C_j(s))} \!\!\!\!\! \Big(\big(n(e, C_j(s))-1 \big)d(e)\Big).
\end{align}

Conclusion follows from Theorem \ref{lemma1}.

\iffalse

\begin{align*}
\Phi(s_i',s_{-i})= &   \sum\limits_{e \in P(i)} r\big(n(e,C_j(s)/\{i\}) \big) d(e)+ \\ &  \sum\limits_{e \in P(i)} r \big(n(e,C_{j'}(s)/\{i\})+1 \big) d(e)- \\ &  \beta(t_i,s_{i}') +O(s). 
\end{align*}

\begin{equation}\label{utility_fuel}
   u_i(s_{i}',s_{-i} ) =  \sum \limits_{e \in P(i)} f( n(e,C_{j'}(s)/\{i\})+1)) d(e) -\beta(t_i,s_i ) .
\end{equation}

\fi

\iffalse
it follows that

\begin{equation}
k_p \!\!\!\!\!   \sum \limits_{C_j \in C(s)}    \sum \limits_{e \in P(C_j)} \!\!\!\!\! \Big(n(e, C_j(s))-1 \Big) -  k_t \sum \limits_{i \in \mathcal N} |t_j-t_i|.
\end{equation}
Then, the conclusion follows from Theorem \ref{lemma1}, with $f(n)=1$ if $n>1$ and $f(1)=0$. 

\fi

\fi

\bibliographystyle{ieeetr} % 
\bibliography{RefDatabase}

\end{document}